\begin{document}
\title{Leakage-resilient Algebraic Manipulation Detection Codes with Optimal Parameters} 

\author{Divesh~Aggarwal\\
National University of Singapore
\and Tomasz~Kazana\\
University of Warsaw
\and Maciej~Obremski\\
National University of Singapore}

\runninghead{D. Aggarwal, T. Kazana, M. Obremski}{Leakage-resilient Algebraic Manipulation Detection Codes}

\maketitle

\begin{abstract}
Algebraic Manipulation Detection (AMD) codes
is a cryptographic primitive that was introduced by Cramer, Dodis, Fehr, Padr{\'o} and Wichs~\cite{CDFPW08}. They are keyless message authentication codes that protect messages against additive tampering by the adversary assuming that the adversary cannot ``see" the codeword. For certain applications, it is unreasonable to assume that the adversary computes the added offset without any knowledge of the codeword $c$. Recently, Ahmadi and Safavi-Naini~\cite{AS13}, and then Lin, Safavi-Naini, and Wang~\cite{LSW16} gave a construction of leakage-resilient AMD codes where the adversary has some partial information about the codeword before choosing added offset, and the scheme is secure even conditioned on this partial information.

In this paper we establish bounds on the leakage rate $\rho$  and the code rate $\kappa$ 
for leakage-resilient AMD codes. In particular we prove that $2\rho+\kappa <1$ and for the weak case (security is averaged over a uniformly random message) $\rho+\kappa <1$. These bounds hold even if adversary is polynomial-time bounded, as long as we allow leakage function to be arbitrary.

We present constructions of AMD codes that (asymptotically) fulfill the above bounds for almost full range of parameters $\rho$ and $\kappa$. This shows that the above bounds and constructions are in-fact optimal. 

In the last section we show that if a leakage function is computationally bounded (we use the Ideal Cipher Model) then it is possible to break these bounds.
\end{abstract}

\section{Introduction}
\label{sec:intro}

Algebraic Manipulation Detection (AMD) codes~\cite{CDFPW08} are keyless message authentication codes that protect messages against additive tampering by the adversary assuming that the adversary cannot ``see" the codeword. In AMD codes, a message $m \in \bits^k$ is encoded to a codeword $C$ in $\bits^n$, and the codeword is stored such that the adversary cannot get any information about the codeword. The adversary is assumed to be able to add an arbitrary element $\mathsf{A}$ to $C$ such that $C + \mathsf{A}$ could potentially decode to a message $m' \neq m$. In a $\delta$-secure AMD code, such a manipulation succeeds with probability $\delta$, and with probability $1-\delta$, the decoder on input $C + \mathsf{A}$, either outputs $m$ or a special symbol $\bot$ indicating that the tampering (by the adversary) has been detected. Another notion that has been considered in~\cite{CDFPW08} is that of weakly secure AMD codes (also called weak AMD codes), where the security guarantee is only for a uniformly random message over the message space $\bits^k$, and the coding scheme is deterministic. 

As mentioned in~\cite{CDFPW08}, AMD codes find useful applications in linear secret sharing schemes (e.g. Shamir's secret sharing~\cite{Sha79}) and Fuzzy Extractors~\cite{DORS06}. In particular, AMD codes can be used to turn any linear secret sharing scheme into a so called {\em robust secret sharing scheme}~\cite{TW88}, which ensures that no unqualified subset of players can modify their shares and cause the reconstruction of a string $s'$ which is different from the secret $s$. Similarly, AMD codes can help turn fuzzy extractors into robust fuzzy extractors that were first considered by Boyen et al.~\cite{Boy04,BDKOS05}. We direct the reader to~\cite{CDFPW08} for a more detailed discussion of these applications. 

In reality, the assumption that the adversary computes the offset $\mathsf{A}$ without any knowledge of the codeword $c$ might be unrealistic due to the presence of side channel attacks~\cite{Koc96,KJJ99,AARR02}, where the adversary can obtain partial information (leakage) on the secret state of an implementation of a cryptographic scheme, by exploiting physical phenomena.  

Recently, Ahmadi and Safavi-Naini~\cite{AS13}, and then Lin, Safavi-Naini, and Wang~\cite{LSW16} gave a construction of so called $\rho$-Linear Leakage-Resilient AMD ($\rho$-LLR-AMD) codes where the adversary has some partial information about the codeword $c$ before choosing $\mathsf{A}$, and the scheme is secure even conditioned on this partial information. In~\cite{AS13}, the authors consider the notion of a coding scheme from $m \in \bits^k$ to $c \in \bits^n$ where the encoding algorithm uses randomness $R \in \bits^\sigma$, and the adversary computes $\mathsf{A}$ given partial information $Z$ such that the entropy of $R$ conditioned on $Z$ is at least $(1-\rho) \sigma$. A similar notion of weak $\rho$-LLR-AMD codes was defined and constructed where the security is for a uniformly random message $M$, and the entropy of $M$ conditioned on $Z$ is assumed to be at least $(1-\rho)k$. 

In the subsequent work, Lin, Safavi-Naini, and Wang~\cite{LSW16} considered a stronger notion of $\rho$-AMD codes, where $Z$ carries information about the codeword, and the entropy of the codeword $C$ conditioned on $Z$ is at least $(1-\rho) n$. Similar to the original AMD codes defined in~\cite{CDFPW08}, the authors defined weak and strong $\rho$-AMD codes as deterministic and randomized codes that guarantee security for a uniformly distributed message and any message, respectively. Since $\rho$-AMD codes are the main topic of our paper, we briefly restate the main application of $\rho$-AMD codes as discussed in~\cite{LSW16}. 

\paragraph*{Robust ramp secret sharing scheme.} A $(t,r)$-ramp secret sharing scheme~\cite{BM84,IY06} is a secret sharing scheme such that any $t$ or fewer shares reveal nothing about the secret $s$, and any $r$ or more shares are enough to reconstruct the secret. If the number of shares $a$ is between $r$ and $t$, then an $\frac{a-t}{r-t}$ fraction of the secret is leaked. Some authors such as~\cite{JM96} refer to this as a linear ramp secret sharing scheme. 
 By encoding a secret with a $\rho$-AMD code with error $\delta$, and then using a $(t,r)$-ramp secret sharing scheme, we can ensure that as long as the number of shares are at most $t + \lfloor \rho(r - t) \rfloor$, the probability of being able to reconstruct the secret is upper bounded by $\delta$. Notice that if we assume that the secret is chosen uniformly at random, then even a weak $\rho$-AMD code will be sufficient for this application. 

For this application, or for that matter any other application of $\rho$-AMD codes, we want the leakage fraction $\rho$ to be as large as possible and for the efficiency of the scheme, we additionally want the rate of the codeword $\kappa := \frac{k}{n}$ also to be as large as possible. In~\cite{LSW16}, the authors give a construction of strong $\rho$-AMD codes with error $\delta$, where $\kappa = \frac{d}{d+2}$, and $\rho = \frac{1}{d+2} - \eps$, where $\eps$ is a small constant that depends on $\delta$, and $d$ is a positive integer. In order to maximise the leakage, we can set $d = 1$, which will imply that $\rho \approx \frac{1}{3}$, and the rate of the code is $\frac{1}{3}$. Also, it was shown in~\cite{LSW16} that for any strong $\rho$-AMD code with any error $\delta$, we must have that $\kappa + \rho < 1$. This leads us to the following question.

\begin{question}
\label{que:strong}
 Does there exist a strong $\rho$-AMD code with leakage rate $\rho \ge \frac{1}{3}$? Can we obtain a better tradeoff between $\kappa$ and $\rho$?
\end{question}

In this paper, we answer both these questions in the affirmative. In Section~\ref{sec:const}, we generalise the construction from~\cite{LSW16} to obtain a construction of a whole family of $\rho$-AMD codes for a wider range of parameters. More precisely (see Corollary \ref{cor:strong-AMD} for details), we have constructions that are secure as long as $2 \rho + \kappa < 1$. Moreover, we show in Section~\ref{sec:lower-bound} that there exists no construction of strong $\rho$-AMD codes that is secure if $2\rho + \kappa \ge 1$. This means that we covered the whole space of possible values of $\rho$ and $\kappa$. On the other hand, we prove that if we work in Ideal Cipher Model we can go even further: we can break proven barrier and achieve $\rho$ arbitrary close to $1$. See Section~\ref{breakbound} for details.

Similarly, as above, for weak $\rho$-AMD codes with error $\delta$, Lin et al. gave a construction with $\kappa = \frac{d}{d+1}$, and $\rho = \frac{1}{d+1} - \eps$, where $\eps$ is a small constant that depends on $\delta$, and $d$ is a positive integer. Setting $d = 1$, we get $\rho \approx \frac{1}{2}$, and $\kappa = \frac{1}{2}$. They, however, failed to obtain any nontrivial condition under which there exist weak $\rho$-AMD codes. We can again ask a question similar to Question~\ref{que:strong} for weak $\rho$-AMD codes. 

\begin{question}
 Does there exist a weak $\rho$-AMD code with leakage rate $\rho \ge \frac{1}{2}$? Can we obtain some nontrivial tradeoff between $\kappa$ and $\rho$?
\end{question}

We answer the first question in the negative and the second in the affirmative by showing in Section~\ref{sec:lower-bound} that for any $\rho \in (0,1)$ there exists no weak $\rho$-AMD code with $\rho + \kappa \ge 1$, or $\rho \ge \kappa$. In other words, for any secure weak $\rho$-AMD code, we must have $\rho + \kappa < 1$, and $\rho < \kappa$. We also include a construction achieving parameters similar to~\cite{LSW16} in Section~\ref{sec:const}. 

We would again like to remark that all our constructions and proofs in Section~\ref{sec:const} closely resemble those in~\cite{CDFPW08,LSW16}. Our main contribution is to show that these constructions are optimal and that we can cover the whole space of  feasible parameters.

In the following, we briefly discuss the leakage models that have been considered in the literature.

\subsection{Leakage models}

In general, we consider the leakage as $f(x)$ where $f$ is the leakage function, and $x$ is the secret state. Notice that for achieving any security guarantee, we have to restrict the amount of leakage allowed, or else the adversary can leak the entire state. 
\paragraph{Bounded Leakage} The most widely used leakage model is the bounded leakage model introduced by Dziembowski and Pietrzak~\cite{DP08}. The popularity of this model is in part due to its simplicity. In this model, it is assumed that $|f(x)|$ is bounded, and (significantly) smaller than $|x|$. 

\paragraph{Noisy Leakage} The main limitation of the bounded leakage model is that real world side channel information obtained by the adversary is rarely bounded in length. Typically, the length is huge, but the correlation between the leakage and the secret state is small, i.e., most of the leakage is ``noise". This model was first considered by Naor and Segev~\cite{NS09} informally, and was more formally introduced by~\cite{DORS08}. In this model, the idea of weak correlation between the leakage and the secret is captured by the assumption that the average min-entropy of $x$ given $f(x)$ remains high. 

As shown in Lemma~\ref{lem:condmin}, the noisy leakage model is more general than the bounded leakage model. Thus, all results in this paper are in the noisy leakage model.

\section{Preliminaries}

For an integer $m \in \mathbb{N}$, we denote the
set of integers $\{1, \ldots, m\}$ by $[m]$. Unless otherwise stated, $\F = \F_q$ denotes a finite field of size $q$. 

The {\em min-entropy} of a
random variable $X$ is defined as $\hinf(X) \eqd - \log(\max_x
\Pr[X=x])$.
This notion is fundamental for cryptographic community as it it is used to measure the probability of guessing a random element drawn out from the distribution $X$.

We also define \emph{average (aka conditional)
min-entropy} of a random variable $X$ conditioned on another random
variable $Z$ as
\begin{eqnarray*}
\thinf(X|Z) &\eqd& -\log\left(\Ex_{z \leftarrow Z} \left[~\max_x\Pr[X=x|Z=z]~\right]\right) \\
&=& -\log\left(\Ex_{z\leftarrow Z}\left[2^{-\hinf(X|Z=z)}\right]\right) \;,
\end{eqnarray*}

where $\Ex_{z\gets Z}$ denotes the expected value over $z\gets Z$. Intuitively, if $\thinf(X|Z)$ is high, then almost certainly (for all except only negligible fraction of events from $Z$) the probability of guessing $X$ conditioned on some $z\gets Z$ is low. Placing the operator $\Ex$ inside the logarithm is crucial in the definition to hold this intuition. For more discussion on this notion one is referred to \cite{DORS06}.

The following lemma shows a fundamental property of conditional min-entropy.
\begin{lemma}[\cite{DORS08}]\label{lem:condmin}
	Given distributions $X$, $Y$ where $|\mathsf{support}(Y)|\leq 2^\lambda$, we have that
	\begin{equation*}
		\thinf(X|Y)\geq \hinf(X,Y)-\lambda\geq \hinf(X)-\lambda.
	\end{equation*}
\end{lemma}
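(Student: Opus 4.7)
The plan is to prove the two inequalities separately. The second one, $\hinf(X,Y) \geq \hinf(X)$ (from which the stated bound follows at once by subtracting $\lambda$), is simply monotonicity of the maximum probability under marginalization: since $\Pr[X=x] = \sum_y \Pr[X=x,Y=y] \geq \max_y \Pr[X=x,Y=y]$, taking the maximum over $x$ on both sides gives $\max_x \Pr[X=x] \geq \max_{x,y}\Pr[X=x,Y=y]$, and applying $-\log$ yields $\hinf(X,Y) \geq \hinf(X)$.

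For the first inequality $\thinf(X|Y) \geq \hinf(X,Y) - \lambda$, I would unfold the definition of conditional min-entropy and pull the factor $\Pr[Y=y]$ inside the maximum over $x$ (legal since it is a nonnegative constant in that variable), obtaining
$$\Ex_{y \leftarrow Y}\!\left[\max_x \Pr[X=x|Y=y]\right] \;=\; \sum_{y} \Pr[Y=y]\cdot\max_x \Pr[X=x|Y=y] \;=\; \sum_{y} \max_{x} \Pr[X=x, Y=y].$$
I would then bound each summand uniformly by the global maximum $\max_{x,y}\Pr[X=x,Y=y] = 2^{-\hinf(X,Y)}$, and invoke the hypothesis $|\mathsf{support}(Y)| \leq 2^\lambda$ to see that at most $2^\lambda$ terms contribute. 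This gives the upper bound $2^\lambda \cdot 2^{-\hinf(X,Y)}$ on the expectation; taking $-\log$ of both sides yields the claim.

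No step here is genuinely delicate, so there is no real main obstacle. The only point worth isolating is the identity $\Pr[Y=y]\cdot\max_x\Pr[X=x|Y=y]=\max_x \Pr[X=x,Y=y]$, which is exactly what makes the expectation-inside-the-log definition of average min-entropy admit such a clean support-based bound; without this identity, one would lose the advantage of averaging. Everything else is a one-line union-style estimate followed by taking logarithms.
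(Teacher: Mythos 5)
Your proof is correct, and the paper itself gives no proof of this lemma (it is cited directly from~\cite{DORS08}); your argument is exactly the standard one from that reference. Both the reduction to $\sum_y \max_x \Pr[X=x,Y=y]$ via the identity $\Pr[Y=y]\cdot\max_x\Pr[X=x\,|\,Y=y]=\max_x\Pr[X=x,Y=y]$ and the subsequent support-counting bound are the intended steps, so there is nothing to add.
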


We will need also the following result:
\begin{lemma}
\label{lem:bonferroni}
Let $0<p<1$, and let $E_1, \ldots, E_t$ be pairwise independent events such that $\Pr(E_i) = p$ for all $i \in [t]$. Then
\[
\Pr(\cup_{i=1}^t E_i) \ge t \cdot p - \frac{t^2 \cdot p^2}{2} \;.
\]
\end{lemma}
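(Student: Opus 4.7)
The plan is to invoke the second-order Bonferroni inequality and then use pairwise independence to evaluate the cross terms. Specifically, I would first establish the pointwise indicator inequality
\[
\mathbf{1}_{\bigcup_{i=1}^t E_i}(\omega) \;\ge\; \sum_{i=1}^t \mathbf{1}_{E_i}(\omega) - \sum_{1 \le i < j \le t} \mathbf{1}_{E_i \cap E_j}(\omega),
\]
which can be verified by a short case analysis on $k := \#\{i : \omega \in E_i\}$: both sides vanish when $k=0$, both equal $1$ when $k=1$, and when $k \ge 2$ the right-hand side equals $k - \binom{k}{2} \le 1$ while the left-hand side equals $1$.

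Next, I would take expectations of both sides, which turns the indicator identity into the probabilistic inequality
\[
\Pr\Big(\bigcup_{i=1}^t E_i\Big) \;\ge\; \sum_{i=1}^t \Pr(E_i) \;-\; \sum_{1 \le i < j \le t} \Pr(E_i \cap E_j).
\]
At this stage I would plug in the hypotheses: $\Pr(E_i) = p$ gives $\sum_i \Pr(E_i) = tp$, and pairwise independence gives $\Pr(E_i \cap E_j) = p^2$ for every pair, so $\sum_{i<j} \Pr(E_i \cap E_j) = \binom{t}{2} p^2$.

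Finally, I would conclude by the trivial bound $\binom{t}{2} = \tfrac{t(t-1)}{2} \le \tfrac{t^2}{2}$, yielding
\[
\Pr\Big(\bigcup_{i=1}^t E_i\Big) \;\ge\; t p - \binom{t}{2} p^2 \;\ge\; t p - \frac{t^2 p^2}{2},
\]
which is exactly the claimed inequality. There is no real obstacle here: the content is entirely in the truncated inclusion--exclusion step, and pairwise (rather than full mutual) independence is already sufficient because only second-order intersections appear.
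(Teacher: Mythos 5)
Your proof is correct and follows the same route as the paper: truncated inclusion--exclusion (second-order Bonferroni) plus pairwise independence to evaluate the pairwise intersections, then the trivial bound $\binom{t}{2} \le t^2/2$. The only cosmetic difference is that you derive the Bonferroni step from the indicator inequality while the paper simply cites it.
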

\begin{proof}
Using Bonferroni inequality~\cite{Bon36}, we have that
\begin{eqnarray*}
\Pr(\cup_{i=1}^t E_i) &\ge& \sum_{i=1}^t \Pr(E_i) - \sum_{i, j \in [t], i \neq j} \Pr(E_i \cap E_j) \\
                                 &=& t \cdot p - \frac{t (t-1)}{2} p^2 \\
                                 &>& t \cdot p - \frac{t^2 \cdot p^2}{2} \;.
\end{eqnarray*}
\end{proof}

\section{Definitions}
We first define a general coding scheme.
\begin{definition} 
\label{coding}
A coding scheme is given by an encoding function $\Enc:\F^k \times \F^\sigma \mapsto \F^n$ from $k$-length messages to $n$-length codewords\footnote{The encoding function takes randomness of length $\sigma$, which we make explicit in the definition for convenience.}, and a decoding function $\Dec: \F^n \mapsto \F^k \cup \{ \bot \}$ such that, for each $m \in \F^k$, $r \in \F^\sigma$, we have that $\Pr(\Dec(\Enc(m,r)) = m) = 1$.

Additionally, the coding scheme is called regular if $\Enc$ is a one-to-one function. 
 \end{definition}

We now define AMD codes.
\begin{definition} 
\label{def:AMD}
 Let $\Enc:\F^k \times \F^\sigma \mapsto \F^n$, $\Dec: \F^n \mapsto \F^k \cup \{ \bot \}$ be a coding scheme. We say that ($\Enc$, $\Dec$) is a strong ($\rho, \delta$)-AMD code if for any $m \in \F^k$, $R$ uniform in $\F^\sigma$, \[\Pr[\Dec(\Enc(m,R) + \mathsf{A}(Z)) \notin \{m, \bot\} ] \le \delta \;,\] where $Z \in \cZ$ is a   random variable such that $\thinf(\Enc(m,R) | Z) \ge \hinf(\Enc(m,R)) - \rho \cdot( n \log q)$, and $\mathsf{A}: \cZ \mapsto \F^n$ is an arbitrary randomized function chosen by the adversary. The probability is over the randomness of the encoding, the possibly independent randomness of the leakage function, and the randomness of the adversary, i.e., over $R, Z, \mathsf{A}$. 
 
  If the adversary is only allowed time polynomial in $n$ to compute $\mathsf{A}(Z)$, then the underlying scheme is said to be a \emph{computationally secure strong} ($\rho,\delta$)-\emph{AMD code}.
 \end{definition}
 
If the security guarantee is only for a uniform message distribution, then we call such an AMD code a weak AMD code. More formally,

\begin{definition} 
\label{def:weak-AMD} 
 Let $\Enc:\F^k  \mapsto \F^n$, $\Dec: \F^n \mapsto \F^k \cup \{ \bot \}$ be a coding scheme.
 \footnote{In this paper, we restrict our attention to weak leakage resilient AMD codes that are deterministic as in~\cite{AS13,LSW16}. In general, when talking about AMD codes without leakage-resilience, we typically don't assume that the encoding function is deterministic. 
 }
 We say that ($\Enc$, $\Dec$) is a weak ($\rho, \delta$)-AMD code if for $M$ uniform in $\F^k$, we have that \[\Pr[\Dec(\Enc(M) + \mathsf{A}(Z)) \notin \{M, \bot\} ] \le \delta \;,\] where $Z \in \cZ$ is a leakage variable such that $\thinf(\Enc(M) | Z) \ge \hinf(\Enc(M)) - \rho n \log q$, and $A: \cZ \mapsto \F^n$ is an arbitrary function chosen by the adversary. 
 
 If the adversary is only allowed time polynomial in $n$ to compute $\mathsf{A}(Z)$, then the underlying scheme is said to be a computationally secure weak ($\rho,\delta$)-AMD code.
 \end{definition}

\section{Constructing Leakage-resilient AMD codes}
\label{sec:const}

In the following, we show that given AMD codes, we can construct leakage-resilient AMD codes. 
\begin{lemma}\label{lem:creatingleakage}
For any $\delta > 0$, $0 < \rho < 1$, any regular coding scheme $\Enc:\F^k \times \F^\sigma \mapsto \F^n$, $\Dec: \F^n \mapsto \F^k \cup \{ \bot \}$ that is a strong ($0, \delta$)-AMD code is also a strong ($\rho, q^{\rho n} \delta$)-AMD code.
\end{lemma}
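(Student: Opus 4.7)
The plan is to reduce to the $(0, \delta)$-AMD guarantee by conditioning on each outcome of the leakage and then averaging. Fix an arbitrary message $m \in \F^k$, a leakage variable $Z$ satisfying the entropy hypothesis, and an adversarial function $\mathsf{A}$; by averaging over its internal coins we may assume $\mathsf{A}$ is deterministic. Let $W$ denote the event $\{\Dec(\Enc(m,R) + \mathsf{A}(Z)) \notin \{m, \bot\}\}$.

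First, I would fix $z$ in the support of $Z$ and set $a := \mathsf{A}(z)$. Invoking the $(0, \delta)$-AMD property with trivial (constant) leakage gives that for uniform $R' \gets \F^\sigma$ we have $\Pr_{R'}[\Dec(\Enc(m, R') + a) \notin \{m, \bot\}] \le \delta$, so the ``bad set'' $B_z := \{r \in \F^\sigma : \Dec(\Enc(m,r) + a) \notin \{m, \bot\}\}$ has size at most $\delta q^\sigma$. Bounding each term of the conditional success probability by the maximum then yields
\begin{equation*}
\Pr[W \mid Z = z] = \sum_{r \in B_z} \Pr[R = r \mid Z = z] \le |B_z| \cdot 2^{-\hinf(R \mid Z = z)} \le \delta q^\sigma \cdot 2^{-\hinf(R \mid Z = z)}.
\end{equation*}

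Since $\Enc$ is regular, the map $r \mapsto \Enc(m, r)$ is injective, so conditioning on $Z = z$ preserves the max-probability, giving $\hinf(R \mid Z = z) = \hinf(\Enc(m, R) \mid Z = z)$. Averaging over $z$ and applying the definition of average min-entropy,
\begin{equation*}
\Pr[W] \le \delta q^\sigma \cdot \Ex_{z \gets Z}\!\left[2^{-\hinf(\Enc(m,R) \mid Z = z)}\right] = \delta q^\sigma \cdot 2^{-\thinf(\Enc(m,R) \mid Z)}.
\end{equation*}
Regularity also forces $\hinf(\Enc(m,R)) = \sigma \log q$ (the codewords $\Enc(m, r)$ are distinct as $r$ ranges over $\F^\sigma$, hence uniformly distributed on their image of size $q^\sigma$), so substituting the leakage hypothesis $\thinf(\Enc(m,R) \mid Z) \ge \sigma \log q - \rho n \log q$ collapses the right-hand side to $\delta \cdot q^{\rho n}$, as required.

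The step that requires care is the max-times-count bound in the first display, which is precisely where regularity enters: the $(0,\delta)$-AMD guarantee bounds the number of bad randomness values $r$, whereas the entropy hypothesis controls the conditional min-entropy of the \emph{codeword}. Regularity reconciles the two views by identifying the bad set of randomness values with the bad set of codewords, so that the count over $r$ and the conditional min-entropy of $\Enc(m,R)$ match up cleanly. Without it one could have several $r$'s collapsing to the same codeword, breaking this correspondence.
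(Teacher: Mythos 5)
Your proof is correct and follows essentially the same approach as the paper's: condition on each leakage outcome $z$, use the $(0,\delta)$-AMD guarantee to bound the cardinality of the bad set for the offset $\mathsf{A}(z)$, bound the conditional success probability by (size of bad set) $\times$ (conditional max-probability), and average over $z$ to invoke the conditional min-entropy hypothesis. The only cosmetic difference is that you index the bad set by randomness values $r$ and then invoke regularity to pass to codeword min-entropy, while the paper works directly with the bad set of codewords $c$ and invokes regularity only to compute $\hinf(\Enc(m,R))$; these are equivalent.
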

\begin{proof}
Since $(\Enc, \Dec)$ is a $(0,\delta)$-AMD code, we have that for a uniform $R$ in $\F^\sigma$, and any $m \in \F^k, \alpha \in \F^n$,
\begin{equation*}
\Pr(\Dec(\Enc(m, R) + \alpha) \notin \{m, \bot\}) \le \delta \;.
\end{equation*}
Define $\BAD(m, \alpha)$ to be the set of all $c$ in the support of $\Enc(m,R)$ such that $\Dec(c + \alpha) \notin \{m, \bot\}$. From the equation above, we have that
\begin{equation}
\label{eq:no-leakage}
|\BAD(m,\alpha)| \le \delta \cdot q^\sigma \;.
\end{equation} 
Now, consider a leakage variable $Z$ such that $\thinf(\Enc(m,R) | Z) \ge \hinf(\Enc(m,R)) - \rho n \log q$. Since $(\Enc,\Dec)$ is a regular coding scheme, we have that $\hinf(\Enc(m,R)) = \hinf(R) = \sigma \log q$, and hence $\thinf(\Enc(m,R) | Z) \ge (\sigma - \rho n) \log q$. Thus, using the definition of conditional min-entropy, we have that
\begin{equation}
\label{eq:cond-min-entropy}
\sum_{z \in \cZ} \Pr(Z = z) \cdot \max_{c \in \F^n} \Pr(\Enc(m, R) = c \; | \; Z = z)  \le \frac{1}{q^{\sigma - \rho n}} \;.
\end{equation}
We now bound the probability of incorrect decoding when the adversary computes the offset given $Z$. 
\begin{align*}
& \Pr[\Dec(\Enc(m,R) + \mathsf{A}(Z)) \notin \{m, \bot\} ] \\
=&  \sum_{z \in \cZ}  \Big(\Pr[\Dec(\Enc(m,R) + \mathsf{A}(Z)) \notin \{m, \bot\} \; |\; Z=z ] 
 \cdot \Pr[Z=z] \Big) \\
=&  \sum_{z \in \cZ}  \Pr[\Enc(m,R) \in \BAD(m, \mathsf{A}(Z)) \; |\; Z=z ] \cdot \Pr[Z=z] \\
\le&  \sum_{z \in \cZ}  \Big( |\BAD(m, \mathsf{A}(Z)| \max_{c \in \F^n} \Pr(\Enc(m, R) = c \; | \; Z = z) 
 \cdot \Pr(Z = z) \Big) \\
\le&  \delta \cdot q^\sigma \cdot \frac{1}{q^{\sigma - \rho n}} 
=  \delta \cdot q^{\rho n} \;, 
\end{align*}
where the last inequality uses the inequalities (\ref{eq:no-leakage}) and (\ref{eq:cond-min-entropy}).
\end{proof}

Similar to the above, we can construct weak AMD codes with leakage-resilience from a weak AMD code without leakage-resilience. The proof (and even the parameter changes in the formulas) of the following lemma is similar to that of Lemma~\ref{lem:creatingleakage}, but we include it here for completeness.

\begin{lemma}\label{lem:creatingleakageweak}
For any $\delta > 0$, $0 < \rho < 1$, any regular coding scheme $\Enc:\F^k \mapsto \F^n$, $\Dec: \F^n \mapsto \F^k \cup \{ \bot \}$ that is a weak ($0, \delta$)-AMD code is also a weak ($\rho, q^{\rho n} \delta$)-AMD code.
\end{lemma}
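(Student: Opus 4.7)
The approach closely parallels the proof of Lemma~\ref{lem:creatingleakage}, with the role played by the encoding randomness $R$ in the strong case now played by the message $M$ in the weak case. First, I would translate the underlying weak $(0,\delta)$-AMD security into a counting statement: for any fixed offset $\alpha \in \F^n$, define
\[
\BAD(\alpha) \eqd \{ m \in \F^k : \Dec(\Enc(m) + \alpha) \notin \{m, \bot\} \}.
\]
Since $M$ is uniform in $\F^k$, the weak $(0,\delta)$-AMD guarantee immediately gives $|\BAD(\alpha)| \le \delta \cdot q^k$, in analogy with inequality~(\ref{eq:no-leakage}).

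Next, I would convert the leakage hypothesis into a usable bound on conditional probabilities. Because $\Enc$ is regular (one-to-one), we have $\hinf(\Enc(M)) = \hinf(M) = k\log q$, and injectivity also gives $\Pr[\Enc(M)=c \mid Z=z] = \Pr[M=\Enc^{-1}(c)\mid Z=z]$ whenever $c$ is a codeword, so $\thinf(\Enc(M)\mid Z) = \thinf(M\mid Z)$. The hypothesis therefore becomes $\thinf(M\mid Z) \ge (k-\rho n)\log q$, which unpacks to
\[
\sum_{z\in\cZ}\Pr(Z=z)\cdot\max_{m\in\F^k}\Pr(M=m\mid Z=z) \;\le\; q^{-(k-\rho n)},
\]
mirroring inequality~(\ref{eq:cond-min-entropy}).

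Finally, I would bound the adversary's success probability by conditioning on $Z=z$. Since $\Enc$ is injective, the event $\Dec(\Enc(M)+\mathsf{A}(Z))\notin\{M,\bot\}$ is equivalent to $M \in \BAD(\mathsf{A}(z))$ conditioned on $Z=z$, so
\begin{align*}
\Pr[\Dec(\Enc(M)+\mathsf{A}(Z))\notin\{M,\bot\}]
 &= \sum_{z\in\cZ}\Pr(Z=z)\cdot\Pr[M\in\BAD(\mathsf{A}(z))\mid Z=z]\\
 &\le \sum_{z\in\cZ}\Pr(Z=z)\cdot |\BAD(\mathsf{A}(z))|\cdot\max_{m}\Pr(M=m\mid Z=z)\\
 &\le \delta\cdot q^k\cdot q^{-(k-\rho n)} \;=\; q^{\rho n}\delta,
\end{align*}
as required.

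There is no genuine obstacle here; the only step worth flagging is the passage from the hypothesis $\thinf(\Enc(M)\mid Z)\ge \hinf(\Enc(M)) - \rho n\log q$ to a conditional min-entropy bound on $M$ itself, which is exactly where regularity (injectivity) of $\Enc$ is used. Everything else is a direct transcription of the strong-case argument with $M$ in place of $R$ and $k$ in place of $\sigma$.
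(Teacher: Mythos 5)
Your proof is correct and follows essentially the same route as the paper's: translate $(0,\delta)$-security into a counting bound $|\BAD(\alpha)|\le\delta q^k$, unfold the conditional min-entropy hypothesis, and sum over $z$. The only cosmetic difference is that you define $\BAD(\alpha)$ as a set of messages and carry the entropy bound on $M$ rather than on $\Enc(M)$, but by regularity these views are interchangeable, exactly as you observe.
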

\begin{proof}
Since $(\Enc, \Dec)$ is a $(0,\delta)$-AMD code, we have that for a uniform $M$ in $\F^k$, and any $\alpha \in \F^n$,
\begin{equation*}
\Pr(\Dec(\Enc(M) + \alpha) \notin \{M, \bot\}) \le \delta \;.
\end{equation*}
Define $\BAD(\alpha)$ to be the set of all $c$ in the support of $\Enc(M)$ such that $\Dec(c + \alpha) \notin \{M, \bot\}$. From the equation above, we have that
\begin{equation}
\label{eq:no-leakage-weak}
|\BAD(\alpha)| \le \delta \cdot q^k \;.
\end{equation} 
Now, consider a leakage variable $Z$ such that $\thinf(\Enc(M) | Z) \ge \hinf(\Enc(M)) - \rho n \log q$. Since $(\Enc,\Dec)$ is a regular coding scheme, we have that $\hinf(\Enc(M)) = \hinf(M) = k \log q$, and hence $\thinf(\Enc(M) | Z) \ge (k - \rho n) \log q$. Thus, using the definition of conditional min-entropy, we have that
\begin{equation}
\label{eq:cond-min-entropy-weak}
\sum_{z \in \cZ} \Pr(Z = z) \cdot \max_{c \in \F^n} \Pr(\Enc(M) = c \; | \; Z = z)  \le \frac{1}{q^{k - \rho n}} \;.
\end{equation}
We now bound the probability of incorrect decoding when the adversary computes the offset given $Z$. 
\begin{align*}
& \Pr[\Dec(\Enc(M) + \mathsf{A}(Z)) \notin \{M, \bot\} ] \\
=& \sum_{z \in \cZ}  \Big( \Pr[\Dec(\Enc(M) + \mathsf{A}(Z)) \notin \{M, \bot\} \; |\; Z=z ] 
 \cdot \Pr[Z=z] \Big) \\
=& \sum_{z \in \cZ} \Pr[\Enc(M) \in \BAD(\mathsf{A}(Z)) \; |\; Z=z ] \cdot \Pr[Z=z] \\
\le& \sum_{z \in \cZ}  \Big(  |\BAD(\mathsf{A}(Z)| \max_{c \in \F^n} \Pr(\Enc(M) = c \; | \; Z = z) 
 \cdot \Pr(Z = z) \Big) \\
\le& \delta \cdot q^k \cdot \frac{1}{q^{k - \rho n}} 
= \delta \cdot q^{\rho n} \;, 
\end{align*}
where the last inequality uses the inequalities (\ref{eq:no-leakage-weak}) and (\ref{eq:cond-min-entropy-weak}).
\end{proof}

We now give a construction of an efficient coding scheme without any leakage. This construction closely resembles the construction of AMD codes from~\cite{CDFPW08}.

\begin{theorem}\label{thm:strong-AMD}
For any positive integers $k < q-2, \sigma$, there exists an efficient regular coding scheme $\Enc:\F^k \times \F^\sigma \mapsto \F^{k + 2\sigma}$, $\Dec: \F^{k + 2\sigma} \mapsto \F^k \cup \{ \bot \}$ that is a strong ($0, \left(\frac{k+1}{q}\right)^\sigma$)-AMD code.
\end{theorem}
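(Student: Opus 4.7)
The plan is to adapt the classical AMD construction of \cite{CDFPW08} to randomness length $\sigma$ by running $\sigma$ independent copies that share a common message. Concretely, define $f : \F^k \times \F \to \F$ by
\[
f(m,r) := r^{k+2} + \sum_{i=1}^{k} m_i\, r^i\,,
\]
and set
\[
\Enc(m, r_1, \dots, r_\sigma) := \bigl(m,\; r_1, \dots, r_\sigma,\; f(m, r_1), \dots, f(m, r_\sigma)\bigr) \in \F^{k + 2\sigma}\,.
\]
The decoder $\Dec$ parses a received word as $(m', r'_1, \dots, r'_\sigma, s'_1, \dots, s'_\sigma)$, checks that $s'_i = f(m', r'_i)$ for every $i \in [\sigma]$, and outputs $m'$ if all $\sigma$ checks pass and $\bot$ otherwise. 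Regularity of the scheme is immediate, since the first $k + \sigma$ coordinates of any codeword uniquely recover $(m, r_1, \dots, r_\sigma)$. The hypothesis $k < q - 2$ ensures that $k+2$ is a nonzero element of $\F$, which will be needed for the leading-coefficient argument below.

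Fix a message $m$ and an adversarial offset $\alpha = (\alpha_m, \alpha_{r_1}, \dots, \alpha_{r_\sigma}, \alpha_{f_1}, \dots, \alpha_{f_\sigma}) \in \F^{k+2\sigma}$. For the decoder to output some $m'' \notin \{m, \bot\}$, one must have $m'' = m + \alpha_m$ with $\alpha_m \neq 0$, and in addition
\[
g_i(r_i) \;:=\; f(m + \alpha_m,\, r_i + \alpha_{r_i}) - f(m, r_i) - \alpha_{f_i} \;=\; 0 \qquad \text{for every } i \in [\sigma]\,.
\]
The crux of the argument is to show that, as a polynomial in the single indeterminate $r_i$ (with $m$ and $\alpha$ fixed), each $g_i$ is a nonzero polynomial of degree at most $k+1$. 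I would split into cases: if $\alpha_{r_i} \neq 0$, the degree-$(k+2)$ contributions from $(r_i + \alpha_{r_i})^{k+2}$ and $r_i^{k+2}$ cancel, while the coefficient of $r_i^{k+1}$ becomes $(k+2)\,\alpha_{r_i}$, nonzero by the assumption on $k$; if $\alpha_{r_i} = 0$, then $g_i(r_i) = \sum_{j=1}^{k} \alpha_{m,j}\, r_i^j - \alpha_{f_i}$, which is nonzero because $\alpha_m \neq 0$ forces some $\alpha_{m,j} \neq 0$. Either way, $g_i$ has at most $k+1$ roots in $\F$, and hence $\Pr_{r_i}[g_i(r_i) = 0] \le (k+1)/q$.

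To finish, I would use that the random coordinates $r_1, \dots, r_\sigma$ are independent and uniform over $\F$, and that for each $i$ the bad set $\{r_i : g_i(r_i) = 0\}$ depends only on the (fixed) adversarial offset $\alpha$. Consequently the events $\{g_i(r_i) = 0\}$, $i \in [\sigma]$, are mutually independent, and the probability that the adversary wins is at most $((k+1)/q)^\sigma$, which is the claimed bound. The main obstacle is the polynomial case analysis for $g_i$, in particular verifying that its $r_i^{k+1}$-coefficient really is nonzero and that the remaining degree-$\le k$ contributions cannot cancel the leading term; once this is in hand, the rest is just the standard Schwartz--Zippel-style root count and independence across the $\sigma$ copies.
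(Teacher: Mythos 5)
Your proposal is correct and matches the paper's proof essentially line for line: the same $\sigma$-fold construction $\Enc(m,r) = (m, r, f(m,r_1),\dots,f(m,r_\sigma))$ with $f(m,a)=a^{k+2}+\sum_i m_i a^i$, the same reduction to showing $\alpha_m=0$ is harmless and $\alpha_m \neq 0$ forces all $\sigma$ degree-$\le k+1$ nonzero polynomials to vanish, the same two-case leading-coefficient check ($\alpha_{r_i}\neq 0$ gives $(k+2)\alpha_{r_i}$, $\alpha_{r_i}=0$ gives a nonzero degree-$\le k$ polynomial from $\alpha_m\neq 0$), and the same use of independence of the $r_i$ to multiply the $(k+1)/q$ bounds.
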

\begin{proof}
Let $f: \F^k \times \F \mapsto \F$ be defined as
\[
\forall m \in \F^k, \; a \in \F, \; f(m, a) := a^{k+2} + \sum_{i=1}^k m_i a^i \;,
\]
where $m = (m_1, \ldots, m_k)$ such that $m_i \in \F$ for $i \in [k]$.  
Then consider the coding scheme is defined as
\begin{align*}
& \forall m \in \F^k, x \in \F^\sigma, \;  \\
& \Enc(m, x) := (m, x, f(m, x_1), \ldots, f(m, x_\sigma))\;,
\end{align*}
The decoding function $\Dec$ on input $m' \in \F^k, x' \in \F^\sigma, y_1, \ldots, y_\sigma \in \F$ checks whether $y_i = f(m', x'_i)$ for $i \in [\sigma]$. If there exists an $i$ such that $y_i \neq f(m',x'_i)$, then $\Dec(m',x', y_1, \ldots, y_\sigma) = \bot$, else $\Dec(m',x', y_1, \ldots, y_\sigma) = m'$. 

Clearly the coding scheme is regular. We now proceed to show that the scheme is secure.

For any $m \in \F^k$, $\alpha \in \F^k$, $\beta \in \F^\sigma$, $\gamma \in \F^\sigma$, and a uniform $X\in \F^\sigma$  we need to bound 
\[
\Pr(\Dec(\Enc(m, X) + (\alpha, \beta, \gamma)) \notin \{m, \bot\}) \;.
\]
Notice that if $\alpha = 0$, then the above probability is $0$, since by definition, for any $m, \beta, \gamma$, $\Dec(\Enc(m, X) + (\alpha, \beta, \gamma))$ is either $m$ or $\bot$. Also, if $\alpha \neq 0$, then $\Dec(\Enc(m, X) + (\alpha, \beta, \gamma))$ is either $m + \alpha \neq m$, or $\bot$. Thus, it is sufficient to bound 
\[
\Pr(\Dec(\Enc(m, X) + (\alpha, \beta, \gamma)) \neq \bot) 
\]
for  any $m \in \F^k$, $\alpha \in \F^k \setminus \{0\}$, $\beta \in \F^\sigma$, $\gamma \in \F^\sigma$, and a uniform $X \in \F^\sigma$. Using the independence of $X_1, \ldots, X_\sigma$, we have that
\begin{align*}
 &\Pr(\Dec(\Enc(m, X) + (\alpha, \beta, \gamma)) \neq \bot)  \\
 =& \prod_{j = 1}^\sigma \Pr \Bigg( X_j^{k+2} + \sum_{i=1}^k m_i X_j^i + \gamma_j = (X_j+\beta_j)^{k+2}  
  + \sum_{i=1}^k (m_i+\alpha_i) (X_j + \beta_j)^i \Bigg)\\
=& \prod_{j = 1}^\sigma \Pr\Bigg(  X_j^{k+2} + \sum_{i=1}^k m_i X_j^i + \gamma_j - (X_j+\beta_j)^{k+2} 
  - \sum_{i=1}^k (m_i+\alpha_i) (X_j + \beta_j)^i = 0 \Bigg)\\
\le& \left( \frac{k+1}{q}\right)^\sigma \;, 
\end{align*}
since

\begin{align*}
P(X_j) &= X_j^{k+2} + \sum_{i=1}^k m_i X_j^i + \gamma_j - (X_j+\beta_j)^{k+2} \\
& - \sum_{i=1}^k (m_i+\alpha_i) (X_j + \beta_j)^i
\end{align*}

is a non-zero polynomial in $X_j$ of degree at most $k+1$. To see that the polynomial is non-zero, note that if $\beta_j \neq 0$, then the co-efficient of $X_j^{k+2}$ in $P(X_j)$ is zero, and that of $X_j^{k+1}$ is $(k+2) \beta_j \neq 0$. On the other hand, if $\beta_j = 0$, then let $t$ be the largest index such that $\alpha_t \neq 0$. Note that one such index exists since $\alpha \neq 0$. Then, the coefficients of $X_j^{k+2}, \ldots, X_j^{t+1}$ in $P(X_j)$ are $0$, and that of $X^t$ is $-\alpha_t \neq 0$. 

The desired result follows.
\end{proof}

The following corollary immediately follows from Lemma~\ref{lem:creatingleakage} and Theorem~\ref{thm:strong-AMD}.
\begin{corollary} \label{cor:strong-AMD}
For any positive integers $k < q-2, \sigma$, and $0 < \rho < \frac{1}{2}$, there exists an efficient regular coding scheme $\Enc:\F^k \times \F^\sigma \mapsto \F^{k + 2\sigma}$, $\Dec: \F^{k + 2\sigma} \mapsto \F^k \cup \{ \bot \}$ that is a strong ($\rho, q^{\rho (k + 2\sigma) - \sigma} (k+1)^\sigma$)-AMD code.
\end{corollary}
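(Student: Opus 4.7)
The plan is to simply compose the two previous results. First I would invoke Theorem~\ref{thm:strong-AMD} with the given parameters $k$ and $\sigma$ (using that $k < q-2$) to produce an efficient regular coding scheme $\Enc:\F^k\times\F^\sigma \to \F^{k+2\sigma}$, $\Dec:\F^{k+2\sigma}\to\F^k\cup\{\bot\}$ that is a strong $(0,\delta_0)$-AMD code with $\delta_0 = \bigl(\tfrac{k+1}{q}\bigr)^\sigma$. Then I would apply Lemma~\ref{lem:creatingleakage}, which takes any regular $(0,\delta)$-AMD code and upgrades it, for the same encoder/decoder pair, into a strong $(\rho, q^{\rho n}\delta)$-AMD code, where $n$ is the codeword length. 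Here $n = k+2\sigma$.

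Substituting gives error bound
\[
q^{\rho(k+2\sigma)}\cdot\Bigl(\tfrac{k+1}{q}\Bigr)^\sigma
= q^{\rho(k+2\sigma)-\sigma}(k+1)^\sigma,
\]
which matches the statement. Regularity and efficiency are inherited from Theorem~\ref{thm:strong-AMD} and are preserved by Lemma~\ref{lem:creatingleakage} (which does not modify the scheme, only its analysis). There is essentially no obstacle: the constraint $\rho < \tfrac12$ is not used in the derivation itself, but only ensures that for a suitable asymptotic choice of $\sigma$ relative to $k$ the exponent $\rho(k+2\sigma)-\sigma$ becomes negative, making the error bound non-trivial; this is a sanity-check on the parameter regime rather than a step in the proof.
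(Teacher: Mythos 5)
Your proof is correct and matches the paper's approach exactly: the paper states that the corollary ``immediately follows from Lemma~\ref{lem:creatingleakage} and Theorem~\ref{thm:strong-AMD},'' and you carry out precisely that composition, with the arithmetic $q^{\rho(k+2\sigma)}\cdot\bigl(\tfrac{k+1}{q}\bigr)^\sigma = q^{\rho(k+2\sigma)-\sigma}(k+1)^\sigma$ verified. Your remark about the role of $\rho<\tfrac12$ (a regime condition rather than a step in the derivation) is also a fair reading of how the paper uses it.
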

We remark that assuming $q \gg k$, as long as $\rho < \frac{\sigma}{k+2\sigma}$ the above construction is secure.

The last expression is equivalent to $2 \rho (k + 2\sigma) + k < 2 \sigma + k$.
This,
since the code rate for $\Enc$, $\kappa = \frac{k}{k + 2\sigma}$,
may be rewritten as
$2 \rho + \kappa < 1$. This formulation is the most interesting for us since it clearly states the postulated tradeoff between $\rho$ and $\kappa$.

Now we will construct weak AMD codes without any leakage. Notice that a construction with similar parameters was already shown in~\cite{LSW16}. We include the construction here for completeness. In Section~\ref{sec:lower-bound}, we show that the parameters obtained in this construction are optimal. 
\begin{theorem}\label{thm:weak-AMD}
Let $k$ be a positive integer, and let the characteristic of $\F$ be greater than $2$. Then there exists an efficient regular coding scheme $\Enc:\F^k \mapsto \F^{k + 1}$, $\Dec: \F^{k + 1} \mapsto \F^k \cup \{ \bot \}$ that is a weak ($0, \frac{1}{q}$)-AMD code.
\end{theorem}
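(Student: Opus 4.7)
The plan is to exhibit a deterministic tag-based encoding with a single field-element check, mimicking the construction of weak AMD codes in~\cite{CDFPW08} but using a quadratic tag so that the proof only requires $\mathrm{char}(\F)>2$. Specifically, I would set
\[
\Enc(m_1,\dots,m_k) \;:=\; \Bigl(m_1,\dots,m_k,\, \sum_{i=1}^k m_i^2\Bigr),
\]
and let $\Dec(m_1',\dots,m_k',y')$ output $(m_1',\dots,m_k')$ if $y'=\sum_i (m_i')^2$ and $\bot$ otherwise. Regularity and correctness of decoding are immediate from inspection.

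For the security analysis, I would write the adversarial offset as $(\alpha,\beta)\in \F^k\times \F$ and split into cases. If $\alpha = 0$, then $\Enc(M)+(0,\beta)$ decodes either to $M$ (when $\beta=0$) or to $\bot$ (otherwise), so the bad event is empty. If $\alpha \neq 0$, then $\Dec(\Enc(M)+(\alpha,\beta))$ can only be $M+\alpha$ or $\bot$, and the former (which is the only value in the ``bad'' set) occurs precisely when
\[
\sum_{i=1}^k (M_i+\alpha_i)^2 \;=\; \sum_{i=1}^k M_i^2 + \beta,
\]
which after expansion rearranges to the affine equation
\[
2\sum_{i=1}^k \alpha_i M_i \;=\; \beta - \sum_{i=1}^k \alpha_i^2.
\]

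The only real subtlety, and where $\mathrm{char}(\F)>2$ is used, is to verify that this equation is a non-trivial affine constraint on $M$: since $2\neq 0$ in $\F$ and $\alpha\neq 0$, the linear form $M\mapsto 2\sum_i \alpha_i M_i$ is a non-zero linear functional on $\F^k$, so its fibres all have size exactly $q^{k-1}$. Dividing by $q^k$, the probability over uniform $M\in \F^k$ that the bad event occurs is exactly $1/q$, which establishes the weak $(0,1/q)$-AMD property. I do not expect any serious obstacle; the main thing to be careful about is to record that the construction is genuinely deterministic (so there is no randomness $R$ to average over) and that the $\alpha=0$ case is handled by the convention that decoding back to $M$ or to $\bot$ is not counted as a break.
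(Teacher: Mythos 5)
Your proposal is correct and matches the paper's proof essentially line for line: same quadratic-tag encoding $\Enc(m)=(m,\sum_i m_i^2)$, same case split on $\alpha=0$ versus $\alpha\neq 0$, and the same expansion to the affine equation $2\sum_i\alpha_i M_i = \beta-\sum_i\alpha_i^2$ whose non-triviality (using $\mathrm{char}(\F)>2$) yields probability exactly $1/q$. The paper phrases the final step by isolating a single coordinate $M_i$ with $\alpha_i\neq 0$ rather than counting fibres of the linear functional, but these are the same computation.
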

\begin{proof}
Let $g: \F^k  \mapsto \F$ be defined as
\[
\forall m \in \F^k,  \; g(m) := \sum_{j=1}^k m_j^2 \;.
\]

where $m = (m_1, \ldots, m_k)$ such that $m_j \in \F$ for $j \in [k]$.  
Then consider the coding scheme is defined as
\[
\forall m \in \F^k, \; \Enc(m) := (m, g(m))\;,
\]
The decoding function $\Dec$ on input $m' \in \F^k, y \in \F$ checks whether $y = g(m')$. If $y \neq g_i(m')$, then $\Dec(m',  y) = \bot$, else $\Dec(m', y) = m'$. 

For any $\alpha \in \F^k$, $\beta \in \F$, and a uniform $M\in \F^k$  we need to bound 
\[
\Pr(\Dec(\Enc(M) + (\alpha, \beta)) \notin \{M, \bot\}) \;.
\]
Notice that if $\alpha = 0$, then the above probability is $0$, since by definition, for any $\beta$, $\Dec(\Enc(M) + (\alpha, \beta))$ is either $M$ or $\bot$. Also, if $\alpha \neq 0$, then $\Dec(\Enc(M) + (\alpha, \beta))$ is either $M + \alpha \neq M$, or $\bot$. Thus, it is sufficient to bound 
\[
p = \Pr(\Dec(\Enc(M) + (\alpha, \beta)) \neq \bot) 
\]
for  any $\alpha \in \F^k \setminus \{0\}$, $\beta \in \F$, and a uniform $M \in \F^k$. Without loss of generality, let $\alpha_i \neq 0$. Then,
\begin{eqnarray*}
p &=& \Pr\left( \sum_{j=1}^k \left((M_j + \alpha_j)^2 - M_j^2\right) = \beta\right) \\
&=& \Pr\left( \sum_{j=1}^k \left(2 \alpha_j M_j + \alpha_j^2 \right) = \beta\right) \\
&=& \Pr( 2 \alpha_i M_i = A) \;,
\end{eqnarray*}
where $A$ depends on $M_1, \ldots, M_{i-1}, M_{i+1}, \ldots, M_k, \alpha_1, \ldots, \alpha_k, \beta$, and is independent of $M_i$. Thus, using the independence of $M_i$ from $M_1, \ldots, M_{i-1}, M_{i+1}, \ldots, M_k$, we see that $p = \frac{1}{q}$. 
\end{proof}

The following corollary immediately follows from Lemma~\ref{lem:creatingleakageweak} and Theorem~\ref{thm:weak-AMD}.
\begin{corollary} \label{cor:weak-AMD}
For any positive integers $k$, and $0 < \rho < \frac{1}{k+1}$, there exists an efficient regular coding scheme $\Enc:\F^k \mapsto \F^{k + 1}$, $\Dec: \F^{k + 1} \mapsto \F^k \cup \{ \bot \}$ that is a weak ($\rho, q^{\rho (k + 1) - 1} $)-AMD code.
\end{corollary}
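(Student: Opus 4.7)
The plan is to obtain this corollary as an immediate composition of the two preceding results, since its statement is designed precisely to match the parameters produced by plugging Theorem~\ref{thm:weak-AMD} into Lemma~\ref{lem:creatingleakageweak}. First, I would invoke Theorem~\ref{thm:weak-AMD} (which requires only that the characteristic of $\F$ exceed $2$, i.e.\ that $\F$ be large enough for the squaring map to be suitable) to produce the deterministic coding scheme $\Enc(m) = (m, g(m))$ with $g(m) = \sum_{j=1}^k m_j^2$, from $\F^k$ to $\F^{k+1}$. This scheme is trivially regular because its first $k$ output coordinates reproduce the message, so $\Enc$ is injective; and Theorem~\ref{thm:weak-AMD} already certifies it as a weak $(0, 1/q)$-AMD code.

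Next I would apply Lemma~\ref{lem:creatingleakageweak} to this base code with parameters $n = k+1$ and $\delta = 1/q$. The lemma promotes any regular weak $(0,\delta)$-AMD code to a weak $(\rho, q^{\rho n} \delta)$-AMD code for any $0 < \rho < 1$, and here the resulting error bound becomes
\[
q^{\rho (k+1)} \cdot q^{-1} \;=\; q^{\rho (k+1) - 1},
\]
matching the statement exactly. The only things left to check are bookkeeping: that regularity is preserved (the encoding function is unchanged, so it is), that the hypothesis of Lemma~\ref{lem:creatingleakageweak} is met (the base code is weak $(0, 1/q)$-AMD and regular), and that the leakage parameter range $0 < \rho < \frac{1}{k+1}$ in the corollary is precisely what makes the error bound $q^{\rho(k+1)-1}$ strictly less than $1$, so that the guarantee is nontrivial.

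There is no real obstacle here beyond verifying the parameter substitution; in particular no new construction, no new combinatorial estimate, and no new probabilistic argument is needed. The main thing a reader might want flagged is simply that $n = k+1$ in the application of Lemma~\ref{lem:creatingleakageweak}, so that the exponent $\rho n$ in the lemma's conclusion correctly becomes $\rho(k+1)$, and that the constraint $\rho < \frac{1}{k+1}$ in the corollary is exactly the region where this error exponent is negative.
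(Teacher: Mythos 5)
Your proposal is correct and matches the paper's own derivation exactly: the paper states that the corollary follows immediately by composing Theorem~\ref{thm:weak-AMD} (the weak $(0,1/q)$-AMD base code into $\F^{k+1}$) with Lemma~\ref{lem:creatingleakageweak}, yielding error $q^{\rho(k+1)}\cdot q^{-1}=q^{\rho(k+1)-1}$. Your added bookkeeping about regularity, the implicit characteristic-$>2$ hypothesis inherited from the theorem, and the range $\rho<\frac{1}{k+1}$ being where the bound is nontrivial is all accurate and in the spirit of the paper's brief remark following the corollary.
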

Thus, as long as $\rho < \frac{1}{k+1}$, or in other words, $ \rho + \kappa < 1$, and $\rho < \kappa$, the above construction is secure.

\section{Optimal Bounds on Leakage and Code Rate}
\label{sec:lower-bound}

In this section we show that the constructions in Section~\ref{sec:const} are asymptotically optimal. In fact, even if we allow that adversary is polynomial-time bounded (as long as the leakage is arbitrary), there still does not exist a construction of leakage resilient AMD codes that allow a better tradeoff between the rate of the code and the allowed leakage.

The following corollary is immediate from Lemma~\ref{lem:bonferroni}. 
\begin{corollary}
\label{cor:bonferroni}
Let $\F_{N}$ be a finite field, and let $A, B$ be uniform and independent in $\F_{N}$. Let $y_1, \ldots, y_t$ be some fixed distinct elements of $\F_{N}$.  Also, let $S$ be a subset of $\F_{N}$. Then the probability that there exists $i \in [t]$ such that $A y_i + B \in S$ is at least
\[
\frac{t \cdot |S|}{N} - \frac{t^2 \cdot |S|^2}{2 N^2} \;.
\]
\end{corollary}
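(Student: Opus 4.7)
The plan is to apply Lemma~\ref{lem:bonferroni} essentially directly. For each $i \in [t]$, I would introduce the event
\[
E_i := \{A y_i + B \in S\},
\]
so that the probability to be lower-bounded is exactly $\Pr(\cup_{i=1}^t E_i)$. It then suffices to verify the two hypotheses of the lemma: that every $E_i$ has probability $p := |S|/N$, and that the $E_i$'s are pairwise independent.

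The marginal calculation is immediate: for any fixed $y_i$, the random variable $A y_i + B$ is uniform on $\F_N$, because $B$ is uniform on $\F_N$ and independent of $A$ (so conditioning on any value of $A$ already gives a uniform result). Hence $\Pr(E_i) = |S|/N$ as required.

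For pairwise independence, the key step is to observe that for $i \neq j$ the linear map
\[
(a,b) \;\longmapsto\; (a y_i + b,\; a y_j + b)
\]
is a bijection on $\F_N^2$: its coefficient matrix $\left(\begin{smallmatrix} y_i & 1 \\ y_j & 1 \end{smallmatrix}\right)$ has determinant $y_i - y_j$, which is nonzero precisely because of the distinctness assumption on the $y_i$'s. Thus $(A y_i + B,\; A y_j + B)$ is uniform on $\F_N^2$, so $\Pr(E_i \cap E_j) = (|S|/N)^2 = \Pr(E_i)\Pr(E_j)$.

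With both hypotheses verified, Lemma~\ref{lem:bonferroni} with $p = |S|/N$ immediately yields the stated bound $t|S|/N - t^2 |S|^2/(2N^2)$. There is no real obstacle; the only place where care is needed is the pairwise-independence step, where the hypothesis that the $y_i$'s are \emph{distinct} is essential (otherwise the coefficient matrix could be singular and the two events could be perfectly correlated rather than independent).
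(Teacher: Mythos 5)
Your proof is correct and follows essentially the same route as the paper's: both define the events $E_i = \{A y_i + B \in S\}$, verify $\Pr(E_i) = |S|/N$, establish pairwise independence by showing the pair $(A y_i + B, A y_j + B)$ is uniform on $\F_N^2$ (you do this via the determinant $y_i - y_j \neq 0$ of the linear map, the paper by explicitly solving the $2\times 2$ system, which is the same computation), and then invoke Lemma~\ref{lem:bonferroni}. No gaps.
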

\begin{proof}
Let $E$ denote the main event described in the statement and also (for all $i \in [t]$) let $E_i$ be the event that $Ay_i + B \in S$.
Then, of course we have $E = \cup_{i=1}^t E_i$ and $\Pr(E_i) = \frac{|S|}{N}$.
Now, all is set in such a way, that the final result is a conclusion of a direct use of the Lemma~\ref{lem:bonferroni}. The only missing part is to show that events $E_1, \ldots, E_t$ are pairwise independent:

To see this, note that for any $a, b \in \F_N$, and any $i, j \in [t]$ such that $i  \neq j$, we have that
\begin{eqnarray*}
& & \Pr(A y_i + B = a, Ay_j + B = b)\\
& &= \Pr\left(A= \frac{b-a}{y_j - y_i}, B= \frac{by_i-a y_j}{y_i - y_j}\right) \\
& &\stackrel{(\star)}{=} \Pr\left(A= \frac{b-a}{y_j - y_i}\right) \Pr\left(B= \frac{by_i-a y_j}{y_i - y_j}\right) \\
& &= \frac{1}{N} \cdot \frac{1}{N}\\
& &=   \Pr(A y_i + B = a) \cdot \Pr( Ay_j + B = b) \;.
\end{eqnarray*}

(The step ($\star$) follows from the independence of $A$ and $B$.)

The above calculations finish the proof of the pairwise independence of $E_1, \ldots, E_t$ which also ends the whole proof of the Corollary \ref{cor:bonferroni}.

\end{proof}

We are now ready to prove the optimality of our leakage-resilient AMD codes.

\begin{theorem}
\label{thm:impossibility-strong}
For any $\rho \in (0,1)$, there does not exist a computationally secure strong $(\rho, \frac{3}{16})$-AMD code $\Enc:\bits^k \times \bits^\sigma \mapsto \bits^n$, $\Dec: \bits^n \mapsto \bits^k \cup \{\bot\}$ with $2\rho + \frac{k}{n} \ge 1$.
\end{theorem}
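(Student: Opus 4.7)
The plan is to exhibit, for any coding scheme with $2\rho + k/n \ge 1$, an adversary whose tampering map is polynomial-time (while the leakage function is unrestricted) and whose success probability exceeds $\tfrac{3}{16}$; this rules out the claimed error. Set $C_m := \Enc(m, \bits^\sigma)$ (pairwise disjoint across $m$ by correctness of $\Dec$) and $G := \bigcup_m C_m$, and let the adversary target the worst-case message $m^* \in \arg\min_m |C_m|$. Averaging gives $|C_{m^*}| \le |G|/2^k$, hence $|G \setminus C_{m^*}| \ge |G|(1-2^{-k}) \ge |G|/2$ whenever $k \ge 1$. Write $C := \Enc(m^*, R)$ for uniform $R$.

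Case 1: $|G| < 2^{(1-\rho)n}$. Then by pigeonhole $\hinf(C) \le \log|C_{m^*}| \le \log|G| - k < (1-\rho)n - k \le \rho n$, using $k \ge (1-2\rho)n$. The constraint $\thinf(C|Z) \ge \hinf(C) - \rho n$ is thus vacuous, so the adversary leaks $Z := C$ in full, picks any $m' \ne m^*$ and any $c' \in C_{m'}$, and outputs $\mathsf{A} := c' - C$; the decoder returns $m' \notin \{m^*, \bot\}$ with probability $1$.

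Case 2: $|G| \ge 2^{(1-\rho)n}$. Identifying $\bits^n$ with $\F_N$ for $N := 2^n$, the adversary samples $(A, B) \in \F_N^2$ uniformly as shared randomness, independent of $C$, fixes distinct $y_1, \ldots, y_T \in \F_N$ with $T$ the largest integer satisfying $\log(T+1) \le \rho n$, and sets $\mathsf{A}_i := A y_i + B$. The (arbitrary-time) leakage outputs $Z := (A, B, I)$, where $I \in \{0,1,\ldots,T\}$ is the smallest $i$ with $C + \mathsf{A}_i \in G \setminus C_{m^*}$ (and $I := 0$ if no such $i$ exists); the (polynomial-time) tampering map returns $\mathsf{A}_I$ when $I > 0$. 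Since $(A,B)$ is independent of $C$ and $I$ takes at most $T+1 \le 2^{\rho n}$ values, a direct computation gives $\thinf(C|Z) \ge \hinf(C) - \log(T+1) \ge \hinf(C) - \rho n$, so the leakage is admissible. With $p := |G \setminus C_{m^*}|/2^n \ge 2^{-\rho n - 1}$, the events $\{A y_i + B \in (G \setminus C_{m^*}) - c\}$ are pairwise independent with common probability $p$ conditional on $C = c$, so Corollary~\ref{cor:bonferroni} (averaged over $C$) gives success probability at least $Tp - T^2 p^2/2$. For $n$ sufficiently large so that $T = 2^{\rho n} - 1$ is an integer, $Tp$ tends to $1/2$, and the success probability is at least $3/8 - o(1) > 3/16$.

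The main subtleties are the non-regular case---handled by choosing $m^*$ adversarially and using the bound $\hinf(C) \le \log|C_{m^*}|$ to separate the ``$C$ is already determined'' regime from the ``many codewords'' regime---and the conditional-min-entropy bookkeeping for $I$, which dictates the cap $T+1 \le 2^{\rho n}$ and thereby matches the leakage budget to the size of the pairwise-independent shift family fed into Corollary~\ref{cor:bonferroni}.
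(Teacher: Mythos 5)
Your proposal is a valid alternative proof of the theorem, with a different two-case decomposition from the paper. The paper splits on the per-message support size (existence of an $m$ with $|\mathrm{supp}(\Enc(m,R))|\le 2^{(n-k)/2}$ versus all supports large), and in its Case~1 runs a guess-the-randomness attack with pairwise-independent probes in the randomness space $\bits^\sigma$, landing at success probability $3/16$. You instead split on the total image size $|G|$ versus $2^{(1-\rho)n}$, and in your Case~1 observe that $\hinf(C)<\rho n$ makes the min-entropy constraint on $Z$ vacuous, so the adversary may simply leak $C$ and win with probability $1$. This is a genuine simplification: the paper's Case~1 hypothesis also forces $\hinf(\Enc(m,R))\le(n-k)/2\le\rho n$, so the same vacuity holds there and the more elaborate probe-based attack is not needed. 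Your Case~2 is structurally the same random-shift attack as the paper's Case~2, aimed at $G\setminus C_{m^*}$ rather than the full set $\{x:\Dec(C+x)\notin\{m,\bot\}\}$, and you additionally leak $(A,B)$ rather than treating them as derandomized advice; this is a cleaner way to make the shared-coins step explicit, though it does require a conditional version of Lemma~\ref{lem:condmin} (namely $\thinf(C\mid A,B,I)\ge\thinf(C\mid A,B)-\log(T+1)$, which follows from independence of $(A,B)$ and $C$ plus the chain rule), which the paper does not state.

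There is one genuine gap, in the Bonferroni step of Case~2. You set $T$ to be (roughly) $2^{\rho n}-1$ and write $p=|G\setminus C_{m^*}|/2^n\ge 2^{-\rho n-1}$, then claim the success probability is at least $Tp-T^2p^2/2$ and that ``$Tp$ tends to $1/2$.'' But the lower bound $Tp-T^2p^2/2$ from Corollary~\ref{cor:bonferroni} is not monotone increasing in $p$ once $Tp>1$, and $p$ has no upper bound here (when $|G|$ is close to $2^n$, $Tp$ is huge and the Bonferroni expression is negative). The statement ``$Tp$ tends to $1/2$'' is therefore false in general. The fix is standard and is what the paper does implicitly: replace $G\setminus C_{m^*}$ by a subset $S'$ of size exactly $\lceil 2^{(1-\rho)n-1}\rceil$ (possible since $|G\setminus C_{m^*}|\ge 2^{(1-\rho)n-1}$), define $I$ with respect to $S'$, and apply Corollary~\ref{cor:bonferroni} to $S'$, using monotonicity of $\Pr[\bigcup_i E_i]$ under enlarging the target set. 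With this fix $Tp'\in[1/2-2^{-\rho n-1},\,1/2]$ and the bound $Tp'-T^2p'^2/2>3/16$ holds (indeed $\ge 7/32$ already for $\rho n\ge 1$). You should also say a word about why a subset of that size can be chosen without disturbing the pairwise-independence argument (it can, trivially, since Corollary~\ref{cor:bonferroni} applies to any fixed set).
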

\begin{proof}
Let $R$ be uniform in $\bits^\sigma$. Also, let $t = 2^{(n-k)/2}$. We divide the proof in two cases.

\paragraph*{Case 1} There exists a message $m$ such that the support of $\Enc(m,R)$ has cardinality at most $t$. Let $\cC(m)$ be the support of $\Enc(m,R)$. We define the set of good codewords $\cG \subseteq \cC(m)$ such that
\[
\cG := \left\{ c \in \cC(m) \: : \: \Pr(\Enc(m,R) = c) \ge \frac{1}{2t} 
\right\}\;.\]
The probability that $\Enc(m,R) \notin \cG$ is less than $\frac{1}{2t} \cdot t = \frac{1}{2}$. Thus, the probability that $C = \Enc(m, R) \in \cG$ is greater than $\frac{1}{2}$. 

Now, we interpret the domain of the randomness of the $\Enc$ function, i.e., $\bits^\sigma$ as a finite field of size $2^\sigma$, and let $A, B$ be uniformly random variables in $\bits^\sigma$ chosen by the adversary. Also, let $y_1, \ldots, y_{t}$ be fixed and distinct elements of $\bits^\sigma$. 

We define the random variable $Z$ to be the index $i \in [t]$ such\footnote{We note here, that index $i$ has $t$ values, thus by lemma \ref{lem:condmin}, leakage $Z$ fulfills our noisy leakage definition, and is in line with definition \ref{def:AMD}.} that $\Enc(m,Ay_i + B) = \Enc(m, R)$. If no such $i$ exists, then $Z$ is chosen to be an arbitrary index in  $[t]$. Furthermore, we fix a codeword $c^*$ such that $\Dec(c^*) \notin \{m, \bot\}$. 

Then the function $\mathsf{A}(Z)$ chosen by the adversary is defined as
\[
\mathsf{A}(Z) = -\Enc(m, Ay_Z + B) + c^* \;.
\]
Notice that $\Dec(\Enc(m,R) + \mathsf{A}(Z) )= \Dec(c^*)$ if $\Enc(m, R) \in \cG$ and there exists an $i \in [t]$ such that $\Enc(m,Ay_i + B) = \Enc(m, R)$. 
Conditioned on the event that $\Enc(m, R) \in \cG$ (which happens with probability greater than $1/2$), the number of $r \in \bits^\sigma$ such that $\Enc(m,r) = \Enc(m,R)$ is at least $\frac{2^\sigma}{2t}$. Thus, using Corollary~\ref{cor:bonferroni}, the probability that there exists an $i \in [t]$ such that $\Enc(m,Ay_i + B) = \Enc(m, R)$ is at least
\[
\frac{t}{2t} - \frac{t^2}{2 \cdot 4t^2} = 3/8\;. 
\]
Thus, $\Dec(\Enc(m,R) + \mathsf{A}(Z) )= \Dec(c^*) \notin \{m, \bot\}$ with probability $3/16$. 
\paragraph*{Case 2} For every message $m'$, the support of $\Enc(m',R)$ has cardinality greater than $t$. We fix a message $m \in \bits^k$, and the codeword $C = \Enc(m, R)$. 

Now, we interpret the code space, i.e., $\bits^n$ as a finite field of size $2^n$, and let $A, B$ be uniformly random variables in $\bits^n$ chosen by the adversary. Also, let $y_1, \ldots, y_{t}$ be fixed and distinct elements of $\bits^n$. We define the random variable $Z$ to be the index $i \in [t]$ such that $\Dec(C + Ay_i + B) \notin \{m, \bot\}$. If no such $i$ exists, then $Z$ is chosen to be an arbitrary index in  $[t]$. The function $\mathsf{A}(Z)$ chosen by the adversary is defined as $Ay_Z + B$. 

We now compute the probability that $\Dec(C+ \mathsf{A}(Z)) \notin \{m, \bot\}$. The number of strings $x \in \bits^n$ such that $\Dec(C + x) \notin \{m, \bot\}$ is greater than $(2^k-1)t$ since by assumption, for every message $m'$, the support of $\Enc(m',R)$ has cardinality greater than $t$. Thus, using Corollary~\ref{cor:bonferroni}, the probability that  there exists an index $i \in [t]$ such that $\Dec(C + Ay_i + B) \notin \{m, \bot\}$ is at least
\[
\frac{t^2 (2^k-1)}{2^n} - \frac{t^4 (2^k -1)^2}{2 \cdot 2^{2n}} =  \frac{1}{2} -\frac{1}{2^{2k+1}} \ge \frac{3}{8} \;,
\]
for $k \ge 1$. 

\end{proof}

\begin{remark}
We note that the adversary strategies considered in the proof of Theorem~\ref{thm:impossibility-strong} are reminiscent of the strategies used by Paterson and Stinson~\cite{PS16} to obtain bounds on the parameters of weak AMD codes. In particular, Case 1 is similar to the Guess strategy where guessing the codeword or leaking the randomness that generates the codeword is easy, and Case 2 is similar to the Random strategy where a randomly selected offset will be successful with high probability. 
\end{remark}

Theorem~\ref{thm:impossibility-strong} shows that if $2 \rho + \kappa \ge 1$, where the leakage rate is $\rho$ and the code rate is $\kappa$, then strong leakage-resilient AMD codes are impossible. This, combined with Corollary~\ref{cor:strong-AMD}, gives an (asymptotically)  tight characterization of code rate and leakage rate for which leakage-resilient strong AMD codes exist. 

Next, we show that there is no leakage-resilient (even computationally secure) weak AMD code from $k$-bit messages to $n$-bit codewords with leakage rate $\rho$ such that $\rho + \frac{k}{n} \ge 1$. 
\begin{theorem}
\label{thm:impossibility-weak}
For any $\rho \in (0,1)$, there does not exist a computationally secure weak $(\rho, \frac{3}{8})$-AMD code $\Enc:\bits^k \mapsto \bits^n$, $\Dec: \bits^n \mapsto \bits^k \cup \{\bot\}$ with $\rho + \frac{k}{n} \ge 1$ or $\rho \ge \frac{k}{n}$.
\end{theorem}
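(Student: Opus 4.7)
The plan is to mirror the two-case structure of Theorem~\ref{thm:impossibility-strong}, splitting on which of the two conditions triggers the impossibility. Since the weak coding scheme is deterministic and satisfies $\Dec(\Enc(m)) = m$ for every $m$, the map $\Enc$ is injective; hence for uniform $M \in \bits^k$ we have $\hinf(\Enc(M)) = k$ and the image of $\Enc$ has size exactly $2^k$.

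In the first case, suppose $\rho \ge k/n$, so that the leakage budget $\rho n$ is at least $k$. Here the attack is trivial: fix a nonzero $e_1 \in \bits^k$ and define the leakage $Z := (\Enc(M), \Enc(M + e_1))$, whose support has size at most $2^k$. By Lemma~\ref{lem:condmin}, $\thinf(\Enc(M) \mid Z) \ge k - k = 0 \ge k - \rho n$, so the noisy leakage constraint is satisfied. On input $Z = (c_1, c_2)$ the adversary outputs $\mathsf{A}(Z) := c_2 - c_1$ in polynomial time, and $\Enc(M) + \mathsf{A}(Z) = c_2 = \Enc(M + e_1)$, which decodes to $M + e_1 \neq M$ with probability $1$, far above $3/8$.

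In the second case, suppose $\rho + k/n \ge 1$, i.e.\ $\rho n \ge n - k$. I would adapt Case~2 of the strong impossibility proof, rescaling parameters for the deterministic setting. Identify $\bits^n$ with $\F_{2^n}$, let $A, B$ be independent uniform elements of this field, and fix distinct $y_1, \ldots, y_t \in \F_{2^n}$ with $t := 2^{n-k}$. The leakage $Z := (A, B, i)$ records $A, B$ and the smallest index $i \in [t]$ such that $\Dec(\Enc(M) + A y_i + B) \notin \{M, \bot\}$ (or an arbitrary index if none exists), and the adversary outputs $\mathsf{A}(Z) := A y_i + B$. Since $\Enc(M)$ is independent of $(A,B)$, conditioning on $(A,B)$ preserves its min-entropy, and then conditioning on the $t$-valued index $i$ costs at most $\log t = n-k \le \rho n$ bits by Lemma~\ref{lem:condmin}, meeting the noisy leakage constraint. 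By injectivity of $\Enc$, the set $S := \{\, x \in \bits^n : \Dec(\Enc(M) + x) \notin \{M, \bot\} \,\}$ has size $2^k - 1$, so Corollary~\ref{cor:bonferroni} gives success probability at least $(1 - 2^{-k}) - \tfrac12 (1 - 2^{-k})^2$.

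The main obstacle is choosing $t$ so that the leakage budget and the target success probability $3/8$ are simultaneously saturated at the boundary $\rho + k/n = 1$. The choice $t = 2^{n-k}$ works because, writing $u := 1 - 2^{-k}$, the quadratic $u - u^2/2$ is increasing on $[0,1]$ and equals exactly $3/8$ at $u = 1/2$, matching the worst case $k = 1$; for every larger $k$ the bound strictly improves. A secondary subtlety is formalising the interaction between the leakage function and the adversary when both reference the same random pair $(A,B)$; bundling $(A,B)$ into $Z$ and invoking their independence from $\Enc(M)$ resolves this cleanly without degrading the conditional min-entropy bound.
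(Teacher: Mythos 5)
Your proof takes essentially the same approach as the paper: the same two-case split (a trivial one-query attack when $\rho \ge k/n$, and the affine-offset attack with $t = 2^{n-k}$ and the Bonferroni/pairwise-independence bound of Corollary~\ref{cor:bonferroni} when $\rho + k/n \ge 1$), yielding the identical lower bound $(1-2^{-k}) - \tfrac{1}{2}(1-2^{-k})^2 = \tfrac{1}{2} - 2^{-(2k+1)} \ge \tfrac{3}{8}$. The only difference is cosmetic: you bundle $(A,B)$ explicitly into the leakage $Z$ and argue entropy loss via independence, whereas the paper leaves the shared randomness between the leakage oracle and the adversary implicit by declaring $A,B$ ``chosen by the adversary.''
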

\begin{proof}
Let $M$ be chosen uniformly at random in $\bits^k$, and let $C = \Enc(M)$ be fixed. 
If $\rho \ge \frac{k}{n}$, then let $Z$ be $\Dec(C)$, and let $\mathsf{A}(Z)$ be $-\Enc(Z) + \Enc(m')$ for some $m' \in \bits^k \setminus\{ M\}$. Then, it is easy to see that $\Dec(C+ \mathsf{A}(Z)) = m' \notin \{M, \bot\}$ with probability $1$. This shows that $\rho < \frac{k}{n}$. 

Let $t = 2^{n-k}$. We interpret the code space, i.e., $\bits^n$ as a finite field of size $2^n$, and let $A, B$ be uniformly random variables in $\bits^n$ chosen by the adversary. Also, let $y_1, \ldots, y_{t}$ be fixed and distinct elements of $\bits^n$. We define the random variable $Z$ to be the index $i \in [t]$ such\footnote{We note that  index $i$ has $t$ values. By lemma \ref{lem:condmin}, leakage $Z$ fulfills the requirements set in definition \ref{def:weak-AMD}.} that $\Dec(C + Ay_i + B) \notin \{M, \bot\}$. If no such $i$ exists, then $Z$ is chosen to be an arbitrary index in  $[t]$. The function $\mathsf{A}(Z)$ chosen by the adversary is defined as $Ay_Z + B$. 

We now compute the probability that $\Dec(C+ \mathsf{A}(Z)) \notin \{M, \bot\}$. The number of strings $x \in \bits^n$ such that $\Dec(C + x) \notin \{m, \bot\}$ is $2^k-1$, one corresponding to each message in $\bits^k \setminus \{M\}$. Thus, using Corollary~\ref{cor:bonferroni}, the probability that  there exists an index $i \in [t]$ such that $\Dec(C + Ay_i + B) \notin \{M, \bot\}$ is at least
\[
\frac{t (2^k-1)}{2^n} - \frac{t^2 (2^k -1)^2}{2 \cdot 2^{2n}} =  \frac{1}{2}  - \frac{1}{2^{2k+1}} \ge \frac{3}{8} \;,
\]
for $k \ge 1$.
\end{proof}

Theorem~\ref{thm:impossibility-weak} shows that if $\rho + \kappa \ge 1$, where the leakage rate is $\rho$ and the code rate is $\kappa$, then weak leakage-resilient AMD codes are impossible. This, combined with Corollary~\ref{cor:weak-AMD}, gives an (asymptotically)  tight characterization of code rate and leakage rate for which leakage-resilient weak AMD codes exist. 

\section{Breaking the $\rho  < \frac{1}{2}$ barrier for AMD codes in the Ideal Cipher Model}
\label{breakbound}
In Section \ref{sec:lower-bound} we state an inequality $2\rho + \frac{k}{n} \le 1$ that must hold for all strong $(\rho, 3/16)$-AMD codes as introduced in Definition~\ref{def:AMD}. The definition assumes that the leakage variable $Z$ is arbitrary with the only constraint being that the entropy of the codeword conditioned on the knowledge of $Z$ is $1-\rho$ fraction of the original entropy. However, as we will see in this section, our result does not necessarily hold if we impose further conditions on the leakage variable $Z$.

We will work in the Ideal Cipher Model (abbr. ICM), which is equivalent to the Random Oracle Model, see \cite{CHKPST16}). As a reminder: ICM is a model with a public oracle (accessible fully to all parties) that gives access to a family $\{ f_i \}_{i\in I}$ of random (and independent) permutations. Any party may ask for $f_i(u)$ and for $f_i^{-1}(u)$ for any chosen $i$ and $u$. For our application we can even simplify the model and assume that we have only one random permutation $f$ with access to both forward and inverse queries to the oracle for $f$.

We will consider the following simple encoding: $\Enc_{br}(m, r) = (m, f(m,r))$, where $m \in \F^k$, $f: \F^k \times \F^\sigma \mapsto \F^k \times \F^{k+\sigma}$ is the oracle permutation from ICM described above and $r\in\F^\sigma$ is some (potentially huge in comparison to $m$) randomness. Obviously the function $\Enc_{br}$ is efficient and also there exists efficient decoding function that may efficiently (using $f^{-1}$ oracle) check if the codeword is correct.

Now we are ready for introducing the theorem about an interesting property of $\Enc_{br}$. Please note that this time we assume something about $Z$ from Definiton \ref{def:AMD}.
More specifically: We assume that $Z$ is computed by some Turing Machine (called leakage oracle) that is bounded by the number of queries to the ICM oracle.

\begin{theorem}
The above $\Enc_{br}$ is a strong $(\rho , q^{-k} + t / q^{\sigma - \rho n}))$-AMD code in the Ideal Cipher Model if the number of queries to ICM oracle (made together by both the adversary and the leakage oracle) is bounded by $t$. 
\end{theorem}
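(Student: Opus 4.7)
The plan is to split the adversary's winning event into two disjoint cases depending on whether the combined execution of the leakage oracle and the adversary ever ``touches'' the hidden codeword via a query to $f$ or $f^{-1}$.

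Write $Y := f(m,R)$, so that $\Enc_{br}(m,R) = (m,Y)$, and let $\mathsf{Hit}$ be the event that at least one of the at most $t$ ICM queries made during the combined execution is either a forward query with input $(m,R)$ or an inverse query with argument $Y$; since $f$ is a permutation the two formulations are equivalent. Then
\begin{equation*}
\Pr[\mathrm{win}] \;\le\; \Pr[\mathsf{Hit}] \;+\; \Pr[\mathrm{win}\wedge\neg\mathsf{Hit}].
\end{equation*}

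The first step is to show $\Pr[\mathsf{Hit}] \le t/q^{\sigma-\rho n}$. I would use a lazy-sampling description of the ICM to argue that, so long as no previous query has hit, every oracle response returned so far is uniformly random over the unused pool and therefore independent of $R$. Consequently, conditional on any non-hitting transcript together with $Z$, the distribution of $R$ is the same as its distribution given $Z$ alone. By the hypothesis $\thinf(\Enc_{br}(m,R)\mid Z) \ge (\sigma-\rho n)\log q$ and the injectivity of $f(m,\cdot)$, this forces $\Pr[R = r^*\mid \text{view so far}] \le q^{\rho n-\sigma}$ for every $r^*$. Any single query pins down at most one value of $R$ (equivalently, of $Y$) that would make it a hit, so its conditional hit probability is at most $q^{\rho n-\sigma}$. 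A union bound over the at most $t$ queries of both parties gives $\Pr[\mathsf{Hit}]\le t\cdot q^{\rho n-\sigma} = t/q^{\sigma-\rho n}$.

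The second step bounds $\Pr[\mathrm{win}\wedge\neg\mathsf{Hit}]$ by (essentially) $q^{-k}$. Conditioning on the adversary's full view at the time of committing to $(\alpha,\beta)$ and on $\neg\mathsf{Hit}$, the value $Y = f(m,R)$ is, in the lazy-sampling view, uniformly distributed over the set of still-unused outputs in $\F^{k+\sigma}$, a set of size at least $q^{k+\sigma}-t$. Winning requires $\alpha \neq 0$ together with $Y+\beta \in S_\alpha := f(\{m+\alpha\}\times\F^\sigma)$, and for $\alpha \neq 0$ this target set has exactly $q^\sigma$ elements no matter how much of it has been revealed by previous queries. Hence the conditional winning probability is at most $q^\sigma/(q^{k+\sigma}-t)$, which equals $q^{-k}$ up to a $(1-t/q^{k+\sigma})^{-1}$ factor that can be absorbed into the first term. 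Summing the two contributions yields the claimed $q^{-k} + t/q^{\sigma-\rho n}$.

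The main technical obstacle is making the lazy-sampling argument in the first step fully rigorous: one must verify that until the first hit occurs, oracle responses leak no Bayesian information about $R$ beyond what $Z$ already carries. Once this invariant is established, the per-query hit probabilities are directly controlled by the min-entropy hypothesis, and both case bounds reduce to elementary union bounds.
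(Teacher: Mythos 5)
Your high-level split, into a ``hit'' event and a complementary ``no hit'' event bounded by roughly $q^{-k}$, has the same flavour as the paper's argument, and your bound on $\Pr[\mathsf{Hit}]$ via lazy sampling and a per-query union bound is sound (and in fact makes explicit a step the paper glosses over, namely that the ICM queries themselves could reveal $x=f(m,R)$). But the decomposition you chose is not the one the paper uses, and your second case has a genuine error. The paper conditions on whether the \emph{target} point $x+e$ coincides with a learned value $y_j$ whose preimage begins with $m'$; if it does not, the paper argues that $f^{-1}(x+e)$ is a \emph{fresh} point of the permutation, hence its preimage begins with $m'$ with probability roughly $q^{-k}$, regardless of what the conditional distribution of $Y$ is. You instead condition on $\neg\mathsf{Hit}$ (no query touched $(m,R)$ or $Y$) and then assert that, given the adversary's view and $\neg\mathsf{Hit}$, the codeword value $Y$ is uniform over the $\ge q^{k+\sigma}-t$ unused outputs. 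That claim is false: the leakage $Z$ is a direct function of $(m,Y)$ and is part of the conditioning view, and the definition only guarantees $\thinf(Y\mid Z)\ge(\sigma-\rho n)\log q$, so conditionally on $Z$ the codeword $Y$ can be concentrated on as few as $q^{\sigma-\rho n}$ values. Your subsequent estimate $|S_\alpha|/(q^{k+\sigma}-t)$ relies on uniformity of $Y$ over the unused pool, and so does not follow.

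There is a second, related gap in your case 2: even when no query touched $(m,R)$ or $Y$, the shifted point $Y+\beta$ may well have been queried, and if $f^{-1}(Y+\beta)$ happens to start with $m+\alpha$ the adversary wins with probability $1$ on that event. Your $\mathsf{Hit}$ event does not exclude this, so $f^{-1}(Y+\beta)$ is \emph{not} necessarily a fresh sample under $\neg\mathsf{Hit}$. The paper's decomposition handles this cleanly because the event ``$x+e$ equals one of the learned $y_j$'' already encapsulates both dangerous situations (knowing $x$ and having queried the shifted point), and the complementary event really does leave $f^{-1}(x+e)$ undetermined. To repair your argument you would either need to switch to the paper's decomposition, or add a third case covering ``$Y+\beta$ already queried with preimage in $\{m+\alpha\}\times\F^\sigma$'' and bound it separately (another $t\cdot q^{\rho n-\sigma}$ term), and you must replace the ``$Y$ is uniform'' claim by a statement about the freshness of the unqueried portion of $f^{-1}$ rather than of $Y$.
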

\begin{proof}
Let us assume that the adversary knows original $m$ and let us denote by $m'$ the new value of the message in the modified codeword. Also let us denote $x = f(m, r)$. Now the goal of the adversary is to construct $e$ such that $f^{-1}(x+e) = (m', r')$ for some $r'$.

During the execution, the adversary and the leakage oracle learned at most $t$ values $y_1, \ldots, y_t$ such that $f^{-1}(y_j) = (m', r')$ for some $r'$. This means that if $(x+e)$ is not equal to any of such $y_j$ for $j \in [t]$, then the result is simply uniform at random so the probability of success is exactly $q^{-k}$.

So, the only hope to make the probability of winning greater is to pick $e$ such that $x + e = y_j$ for some $1\leq j \leq t$. However since $\thinf(x|Z) \geq \sigma \log q  - \rho n \log q$ then this happens with probability less or equal to $\frac{t}{ 2^{\sigma \log q  - \rho n \log q} }= \frac{t} {q^{\sigma - \rho n}}$.

These two facts about two cases above together imply the statement.
\end{proof}

Thus, this construction gives an AMD code in the Ideal Cipher Model with $\rho \approx \frac{\sigma}{2k+\sigma}$, and $\kappa = \frac{k}{2k+\sigma}$. In this case, we can achieve leakage arbitrarily close to $1$ by having $k \ll \sigma$. 

\section{Conclusion}

Theorem~\ref{thm:impossibility-strong} and Corollary~\ref{cor:strong-AMD} show that strong leakage-resilient AMD codes exist if and only if $2 \rho + \kappa < 1$, where the leakage rate is $\rho$ and the code rate is $\kappa$.

Similarly, Theorem~\ref{thm:impossibility-weak} and Corollary~\ref{cor:weak-AMD} show that weak AMD codes exist if and only if $\rho + \kappa < 1$. 

Our results are asymptotically tight, but it would be interesting to prove that our bounds are tight when the message size is small. We leave this as an open question. 

\section{Acknowledgments}

The first named author would like to thank the Singapore Ministry of Education and the National Research Foundation for their supporting grants. The first and third authors were supported by the grant MOE2019- T2-1-145 Foundations of quantum-safe cryptography. The second named author was supported by the Polish National Science Centre (NCN) SONATA GRANT UMO-2014/13/D/ST6/03252 and by the TEAM/2016- 1/4 grant from the Foundation for Polish Science. The third named author would like to thank the European Research Council for the funding received under the European Unions’s Horizon 2020 research and innovation programme under grant agreement No 669255 (MPCPRO).

\bibliographystyle{fundam}
\bibliography{fuzzy2}

\begin{thebibliography}{10}
\providecommand{\url}[1]{\texttt{#1}}
\providecommand{\urlprefix}{URL }
\expandafter\ifx\csname urlstyle\endcsname\relax
  \providecommand{\doi}[1]{doi:\discretionary{}{}{}#1}\else
  \providecommand{\doi}{doi:\discretionary{}{}{}\begingroup
  \urlstyle{rm}\Url}\fi
\providecommand{\eprint}[2][]{\url{#2}}

\bibitem{CDFPW08}
Cramer R, Dodis Y, Fehr S, Padr{\'o} C, Wichs D.
\newblock Detection of Algebraic Manipulation with Applications to Robust
  Secret Sharing and Fuzzy Extractors.
\newblock In: Smart N (ed.), Advances in Cryptology -- EUROCRYPT 2008. Springer
  Berlin Heidelberg, Berlin, Heidelberg, 2008 pp. 471--488.

\bibitem{AS13}
Ahmadi H, Safavi{-}Naini R.
\newblock Detection of Algebraic Manipulation in the Presence of Leakage.
\newblock \emph{Information Theoretic Security - 7th International Conference,
  {ICITS} 2013, Singapore, November 28-30, 2013, Proceedings}, 2013.
\newblock pp. 238--258.

\bibitem{LSW16}
Lin F, Safavi{-}Naini R, Wang P.
\newblock Detecting Algebraic Manipulation in Leaky Storage Systems.
\newblock \emph{Information Theoretic Security - 9th International Conference,
  {ICITS} 2016, Tacoma, WA, USA, August 9-12, 2016, Revised Selected Papers},
  2016.
\newblock pp. 129--150.

\bibitem{Sha79}
Shamir A.
\newblock How to Share a Secret.
\newblock \emph{Communications of the {ACM}}, 1979.
\newblock \textbf{22}(11):612--613.

\bibitem{DORS06}
Dodis Y, Ostrovsky R, Reyzin L, Smith A.
\newblock Fuzzy Extractors: How to Generate Strong Keys from Biometrics and
  Other Noisy Data.
\newblock Technical Report 2003/235, Cryptology {ePrint} archive, {\tt
  http://eprint.iacr.org}, 2006.
\newblock Previous version appeared at {\em EUROCRYPT 2004}.

\bibitem{TW88}
Tompa M, Woll H.
\newblock How to Share a Secret with Cheaters.
\newblock \emph{Journal of Cryptology}, 1989.
\newblock \textbf{1.3}:133--138.

\bibitem{Boy04}
Boyen X.
\newblock Reusable Cryptographic Fuzzy Extractors.
\newblock In: 11th ACM Conference on Computer and Communication Security. ACM,
  2004 .

\bibitem{BDKOS05}
Boyen X, Dodis Y, Katz J, Ostrovsky R, Smith A.
\newblock Secure Remote Authentication Using Biometric Data.
\newblock In: Cramer R (ed.), Advances in Cryptology---EUROCRYPT 2005, volume
  3494 of \emph{LNCS}. Spring{\-}er-Ver{\-}lag, 2005 pp. 147--163.

\bibitem{Koc96}
Kocher P.
\newblock Timing Attacks on Implementations of {D}iffie-{H}ellman, {RSA},
  {DSS}, and Other Systems.
\newblock In: Koblitz N (ed.), Advances in Cryptology---CRYPTO~'96, volume 1109
  of \emph{LNCS}. Spring{\-}er-Ver{\-}lag, 1996 pp. 104--113.

\bibitem{KJJ99}
Kocher P, Jaffe J, Jun B.
\newblock Differential Power Analysis.
\newblock In: Wiener M (ed.), Advances in Cryptology---CRYPTO~'99, volume 1666
  of \emph{LNCS}. Spring{\-}er-Ver{\-}lag, 1999 pp. 388--397.

\bibitem{AARR02}
Agrawal D, Archambeault B, Rao JR, Rohatgi P.
\newblock The {EM} Side-Channel(s).
\newblock In: Jr BSK, \c{C}etin Kaya~Ko\c{c}, Paar C (eds.), CHES, volume 2523
  of \emph{Lecture Notes in Computer Science}. Springer.
\newblock ISBN 3-540-00409-2, 2002 pp. 29--45.

\bibitem{BM84}
Blakley GR, Meadows C.
\newblock Security of Ramp Schemes.
\newblock \emph{Crypto}, 1984.
\newblock \textbf{LNCS 196}:242--268.

\bibitem{IY06}
Iwamoto M, Yamamoto H.
\newblock Strongly secure ramp secret sharing schemes for general access
  structures.
\newblock \emph{Information Processing Letters}, 2006.
\newblock \textbf{97}(2):52--57.

\bibitem{JM96}
Jackson WA, Martin KM.
\newblock A combinatorial interpretation of ramp schemes.
\newblock \emph{Australasian Journal of Combinatorics}, 1996.
\newblock \textbf{14}:51--60.

\bibitem{DP08}
Dziembowski S, Pietrzak K.
\newblock Leakage-Resilient Cryptography.
\newblock In: 49th Symposium on Foundations of Computer Science. IEEE Computer
  Society, Philadelphia, PA, USA, 2008 pp. 293--302.

\bibitem{NS09}
Naor M, Segev G.
\newblock Public-Key Cryptosystems Resilient to Key Leakage.
\newblock In: Halevi S (ed.), Advances in Cryptology - CRYPTO~2009, volume 5677
  of \emph{LNCS}. Spring{\-}er-Ver{\-}lag, 2009 pp. 18--35.

\bibitem{DORS08}
Dodis Y, Ostrovsky R, Reyzin L, Smith A.
\newblock Fuzzy Extractors: How to Generate Strong Keys from Biometrics and
  Other Noisy Data.
\newblock \emph{SIAM Journal on Computing}, 2008.
\newblock \textbf{38}(1):97--139.

\bibitem{Bon36}
Bonferroni.
\newblock Teoria statistica delle classi e calcolo delle probabilita.
\newblock \emph{Libreria internazionale Seeber}, 1936.

\bibitem{PS16}
Paterson MB, Stinson DR.
\newblock Combinatorial characterizations of algebraic manipulation detection
  codes involving generalized difference families.
\newblock \emph{Discrete Mathematics}, 2016.
\newblock \textbf{339}(12):2891--2906.

\bibitem{CHKPST16}
Coron J, Holenstein T, Kunzler R, Patarin J, Seurin Y, Tessaro S.
\newblock How to Build an Ideal Cipher: The Indifferentiability of the Feistel
  Construction.
\newblock \emph{Journal of Cryptology}, 2016.
\newblock \textbf{29}(1):61--114.

\end{thebibliography}

\end{document}